\newcommand{\N}{\mathbb{N}}
\newcommand{\R}{\mathbb{R}}
\newcommand{\MH}{\mathcal{H}}
\newcommand{\MO}{\mathcal{O}}
\newcommand{\MR}{\mathcal{R}}
\newcommand{\ee}{\mathrm{e}}
\newcommand{\diff}{\mathrm{d}}
\newcommand{\tand}{\quad\text{and}\quad}
\DeclareMathOperator{\dom}{dom}
\DeclareMathOperator{\Tr}{Tr}
\newtheorem{theorem}{Theorem}[section]
\newtheorem{lemma}[theorem]{Lemma}
\newtheorem{cor}[theorem]{Corollary}
\newtheorem{remark}[theorem]{Remark}
\theoremstyle{definition}
\newtheorem{definition}[theorem]{Definition}
\newlength\figwidth
\def\paragraph{%
  \@startsection
    {paragraph}{4}{\parindent}{\z@}{-1.5em}%
    {\normalfont\normalsize\itshape}%
}%
\begin{document}

\title{{Two-sided Bogoliubov inequality to estimate finite size effects in quantum molecular simulations}}\thanks{This article belongs to the themed collection:\\Mathematical Physics and Numerical Simulation of Many-Particle Systems; V.Bach and L.Delle Site (eds.)}

\newcommand\FUBaffiliation{\affiliation{Freie Universität Berlin, Institute of Mathematics, Arnimallee 6, 14195 Berlin, Germany}}
\newcommand\BTUaffiliation{\affiliation{Brandenburgische Technische Universität Cottbus-Senftenberg, Institute of Mathematics, Konrad-Wachsmann-Allee 1, 03046 Cottbus, Germany}}
\author{Benedikt Reible}
\email{benedikt.reible@studserv.uni-leipzig.de}
\FUBaffiliation
\author{Carsten Hartmann}
\email{carsten.hartmann@b-tu.de}
\BTUaffiliation
\author{Luigi Delle Site}
\email{luigi.dellesite@fu-berlin.de}
\FUBaffiliation

\begin{abstract}
We generalise the two-sided Bogoliubov inequality for classical particles [L. Delle Site {\it et al.}, J.~Stat.~Mech.~Th.~Exp. {\bf 083201} (2017)] to systems of quantum particles. As in the classical set-up,  the inequality leads to upper and lower bounds for the free energy difference associated with the partitioning of a large system into smaller, independent  subsystems. From a thermodynamic modelling point of view, the free energy difference determines the finite size correction needed to consistently treat a small system as a representation of a large system. Applications of the bounds to quantify finite size effects are ubiquitous in physics, chemistry, material science, or biology, to name just a few; in particular it is relevant for molecular dynamics simulations in which a small portion of a system is usually taken as representative of the idealized large system. 
\end{abstract}

\maketitle

\section{Introduction}
Realistic physical systems are far too large for being treated at the level of single particles in full resolution, and thus a common approach is to consider small systems whose computation can be carried out at reasonable computational costs. The implicit assumption is that the corresponding statistical mechanics and thermodynamics represent the true physical situations within an acceptable degree of precision when the finite size effects are negligible in comparison with some reference quantity of interest.
While this situation occurs in many fields of physics, chemistry and material science, a field in which such an approximation is routinely used is molecular simulation \cite{frenkel,tuckermann}. Molecular simulation has made an enormous progress in the latest decades in successfully studying classical and quantum particle systems, but without the possibility of simulating a small system as a representative of an ideal infinite system, its power would have been modest due to the limitation of computational resources and the difficulties of data storage\cite{neuron}.

When replacing an infinite or very large system by a considerably smaller subsystem, the modelling error  can be large, and so can be the statistical error when averages over finitely many particles are considered\cite{cortes1,cortes2}. From this perspective, criteria that allow for a precise estimate of the finite size effects play a key role in the assessment of the quality of a simulation study. In a previous work \cite{jstatmecc}, we have derived two-sided Bogoliubov bounds for the interface free energy required for the separation of a classical infinite system into weakly interacting small subsystems that can be used as an error indicator of the model fidelity as discussed above. 
The upper and lower bounds of the interface free energy provide quantitative and computable error bounds to quantify the relevance of the size effects. It is moreover possible to derive tight variational versions of the bounds that can be the basis for systematic improvements of available approximate bounds. 

The aim of this paper is to generalise the bounds of Ref.~\citenum{jstatmecc} for classical particle systems to quantum systems. The classical Bogoliubov bounds rely on a change of measure of the underlying equilibrium probability measure and the non-negativity of the relative entropy between these probability measures; the difficulty here is the non-commutativity of quantum mechanical observables and density operators that makes a straightforward extension of the reasoning of the classical case difficult. Our approach that yields an exact quantum mechanical analogue of the classical bounds is based on the non-negativity of the relative entropy and additional trace inequalities for non-commuting self-adjoint  operators. In contrast to the classical framework, there are various different (and sensible) notions of relative entropy between the statistical distributions of quantum systems\cite{donald1986,hiai1991,moriya2020gibbs}; see also Eq.~(1.29) in Ref.~\citenum{carlen2019}. It turns out that the natural relative entropy analogue for our purposes is the von Neumann relative entropy that has been introduced by Umegaki \cite{umegaki1962}, since (a) it yields formally the same bounds as in the classical case and (b) it can be estimated by Monte Carlo methods. (We emphasize that there are different notions of divergences between probability measures in the classical case, too, beyond the relative entropy that is also known as \textit{Kullback-Leibler divergence}; cf.~Ref.~\citenum{hartmann2021nonasymptotic}).

The paper is organised as follows: We first review the results of Ref.~\citenum{jstatmecc} in Section \ref{sec:bogo2class} and then introduce the von Neumann relative entropy between the statistical operators associated with two quantum systems in Section \ref{sec:relentropyQuant}. Based on the properties of the von Neumann relative entropy, we derive bounds for the interface free energy in the quantum mechanical canonical ensemble in Section \ref{sec:bogo2quant}. The results are summarised and briefly discussed in Section \ref{sec:conclusions}.

\section{Two-sided Bogoliubov inequality for classical systems}\label{sec:bogo2class}
In this section we review the key concepts and results of Ref.~\citenum{jstatmecc} which are mandatory for the extension to the quantum case.
The interface energy resulting from the separation of a large system in independent subsystems, at positive temperature $T > 0$, is defined as the difference between the free energy of the system and the free energy of the uncoupled subsystems. Specifically, we consider a classical system bound to a volume $\Omega \subset \R^n$ which is described by a Hamiltonian $H : \Omega \to \R$. Assume that $H$ can be decomposed according to $H = H_0 + U$ where $H_0 := \sum_{i=1}^{d} H_i$, $H_i : \Omega_i \to \R$, $\Omega_i \subset \R^n$, $\bigcup_{i=1}^{d} \Omega_i = \Omega$, is the Hamiltonian describing the $d \in \N$ non-interacting subsystems and $U : \Omega \to \R$ is the coupling energy between those systems. It will be assumed that the functions $H_0$ and $U$ are continuous, sufficiently fast growing at infinity and bounded from below. The partition functions $Z$ and $Z_0$ associated to $H$ and $H_0$ are given by
\begin{align*}
    Z = \int_{\Omega} \ee^{-\beta H(x)} \, \diff^n x \tand Z_0 = \int_{\Omega} \ee^{-\beta H_0(x)} \, \diff^n x \ .
\end{align*}
We will refrain from indicating the dependency of the partition function on $\beta$, $\Omega$ and the particle numbers. In the following, we provide definitions and properties that are required for the final derivation of the upper and lower bounds of the interface energy.\\
\begin{definition}[Interface energy]
    The difference in free energy $\Delta F$ between the coupled system described by $H$ and the uncoupled subsystems described by $H_0$ is called \textbf{interface energy}, and it is given by
    \begin{align*}
        \Delta F := F - F_0 = - \beta^{-1} \log\left(\frac{Z}{Z_0}\right) \ .
    \end{align*}
\end{definition}

We briefly review the situation for a classical statistical ensemble. To this end, let $(\Omega, \Sigma, P)$ be a probability space (or: ensemble) where $\Sigma=\mathcal{B}(\Omega)$ denotes the $\sigma$-Algebra of Borel subsets of $\Omega$.  For convenience, we consider only probability measures with probability density function (pdf); specifically, we assume that there is an integrable, nonnegative function $p \colon \Omega \to [0,\infty)$ such that for all $A \in \Sigma$, it holds that
\begin{align*}
    P(A) = \int_{A} p(x) \, \diff^n x \ .
\end{align*}

\begin{definition}[Relative entropy]\label{def:partitioning_RE}
    Let $f, g : \Omega \to [0, \infty)$ be two pdfs on $\Omega$. Assume that
    \begin{align*}
        \int_{\{x \in \Omega \ : \ g(x) = 0\}} f(x) \, \diff^n x = 0 \ .
    \end{align*}
    The \textbf{relative entropy} (also known as Kullback-Leibler divergence $KL(f, g)$) between $f$ and $g$ is then defined as
    \begin{align*}
        R(f, g) := \int_{\Omega} \log\left(\frac{f(x)}{g(x)}\right) f(x) \, \diff^n x \ .
    \end{align*}
    In case that the integral of $f$ over the set of zeros of $g$ does not vanish (i.e. if $g \neq 0$ does not hold almost everywhere on $\Omega$ with respect to the probability measure $P$ induced by the density $f$), one defines $R(f, g) := \infty$. Note that the definition of $R$ is based on the limit $\lim_{x \to 0} x \log(x) = 0$.
\end{definition}


It is a simple consequence of Jensen's inequality that $R(f, g) \ge 0$,  with equality if and only if $f = g$ holds $P$-almost everywhere\cite{jstatmecc}.
\subsection{Two-sided Bogoliubov inequality}
In the following, it will always be assumed that the first argument of the relative entropy is strictly positive. This implies that null sets of the measure $P$ are Lebesgue null sets. Let us denote by $p$ and $p_0$ the pdfs of the canonical ensemble associated with the Hamiltonians $H$ and $H_0$, i.e.
\begin{align}\label{eq:partitioning_canonicalClassical}
    p := \frac{1}{Z} \, \ee^{-\beta H} \tand p_0 := \frac{1}{Z_0} \, \ee^{-\beta H_0} \ .
\end{align}
The expectation of any integrable random variable (or: observable) $O$ in the respective ensemble can then be written as
\begin{align*}
    \mathbf{E}_p[O] := \int_{\Omega} O(x) p(x) \, \diff^n x \tand \mathbf{E}_{p_0}[O] := \int_{\Omega} O(x) p_0(x) \, \diff^n x \ .
\end{align*}

\begin{theorem}[Two-sided Bogoliubov inequality]\label{thm:partitioning_bogoliubov}
    If the previous assumptions are satisfied, it follows that
    \begin{align}\label{eq:partitioning_bogoliubov}
        \mathbf{E}_p[U] \le \Delta F \le \mathbf{E}_{p_0}[U] \ .
    \end{align}
\end{theorem}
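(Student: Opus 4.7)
The plan is to derive both inequalities by applying the non-negativity of the relative entropy (Definition~\ref{def:partitioning_RE}) to the pair $(p, p_0)$ and to the pair $(p_0, p)$, respectively. The key observation is that the canonical form of the densities in \eq{eq:partitioning_canonicalClassical} makes the logarithm of their ratio affine in the coupling $U$ with a constant offset proportional to $\Delta F$, so each inequality $R(\cdot,\cdot) \ge 0$ translates directly into one of the desired bounds.

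First, I would check admissibility: since $H_0$ and $H=H_0+U$ are continuous and bounded from below, both $p$ and $p_0$ are strictly positive on $\Omega$, so the absolute continuity hypothesis in Definition~\ref{def:partitioning_RE} is trivially satisfied in both directions and the relative entropies are well defined. Next, I would compute the log-density ratios explicitly. From \eq{eq:partitioning_canonicalClassical} we get
\begin{align*}
    \log\!\left(\frac{p(x)}{p_0(x)}\right) = -\beta U(x) + \log\!\left(\frac{Z_0}{Z}\right) = -\beta U(x) + \beta\,\Delta F,
\end{align*}
and analogously $\log(p_0/p) = \beta U(x) - \beta\,\Delta F$, using the definition $\Delta F = -\beta^{-1}\log(Z/Z_0)$.

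The final step is to integrate against the appropriate density and invoke $R\ge 0$. Integrating the first identity against $p$ yields
\begin{align*}
    R(p,p_0) = \beta\,\Delta F - \beta\,\mathbf{E}_p[U] \ge 0,
\end{align*}
which gives $\mathbf{E}_p[U] \le \Delta F$. Integrating the second identity against $p_0$ yields
\begin{align*}
    R(p_0,p) = \beta\,\mathbf{E}_{p_0}[U] - \beta\,\Delta F \ge 0,
\end{align*}
which gives $\Delta F \le \mathbf{E}_{p_0}[U]$. Combining both produces \eq{eq:partitioning_bogoliubov}.

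I do not expect any real obstacle in the classical proof: the only thing to be careful about is the sign bookkeeping between $\log(Z/Z_0)$ and $\Delta F$, and the implicit assumption that the two expectations $\mathbf{E}_p[U]$ and $\mathbf{E}_{p_0}[U]$ are finite, which is guaranteed by the stated growth and boundedness-from-below conditions on $H_0$ and $U$. The actual difficulty of the paper lies in transplanting this simple cancellation $\log p - \log p_0 = -\beta U + \text{const.}$ to the quantum setting, where $H_0$ and $U$ need not commute and $\log(\rho\rho_0^{-1}) \neq \log\rho - \log\rho_0$ in general; that is where the trace inequalities mentioned in the introduction will have to take over.
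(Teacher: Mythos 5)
Your proof is correct and follows exactly the route the paper intends: the paper defers the classical proof to Ref.~\citenum{jstatmecc}, but its quantum-mechanical argument in Section~\ref{sec:bogo2quant} is precisely the operator analogue of your computation, namely evaluating $\MR(\rho_0,\rho)$ and $\MR(\rho,\rho_0)$ and invoking non-negativity of the relative entropy. Your sign bookkeeping and the two applications of $R\ge 0$ to the pairs $(p,p_0)$ and $(p_0,p)$ are all accurate, so nothing further is needed.
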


{For the full details of the proof we invite the reader to consult Ref.~\citenum{jstatmecc}, here we sketch it only in its essence:
\begin{proof}
    As $H_0$ and $U$ are continuous functions growing sufficiently fast at infinity, it follows that $U$ is integrable with respect to the densities $p$ and $p_0$, i.e. the expectation values are well-defined. Furthermore, $p$ and $p_0$ are strictly positive by construction. The non-negativity of the relative entropy implies:
    \begin{align*}
        0 &\le R(p, p_0) = \int_{\Omega} \log\left(\frac{p(x)}{p_0(x)}\right) p(x) \, \diff^n x = \int_{\Omega} \left[\log\left(\frac{\ee^{-\beta H(x)}}{\ee^{-\beta H_0(x)}}\right) + \log\left(\frac{Z_0}{Z}\right)\right] p(x) \, \diff^n x \\
        &= - \beta \int_{\Omega} \Bigl(H(x) - H_0(x)\Bigr) p(x) \, \diff^n x - \log\left(\frac{Z}{Z_0}\right) \int_{\Omega} p(x) \, \diff^n x \\
        &= - \beta \int_{\Omega} U(x) p(x) \, \diff^n x - \log\left(\frac{Z}{Z_0}\right) \ ,
    \end{align*}
    that is
    \begin{align*}
        \mathbf{E}_p[U] = \int_{\Omega} U(x) p(x) \, \diff^n x \le - \beta^{-1} \log\left(\frac{Z}{Z_0}\right) = \Delta F \ .
    \end{align*}
    Similarly, by interchanging the arguments of the relative entropy $R$, one obtains
    \begin{align*}
        0 &\le R(p_0, p) = \beta \int_{\Omega} U(x) p_0(x) \, \diff^n x + \log\left(\frac{Z}{Z_0}\right) = \beta \, \mathbf{E}_{p_0}[U] + \log\left(\frac{Z}{Z_0}\right) \ ,
    \end{align*}
    i.e. the desired upper bound
    \begin{align*}
        \Delta F \le \mathbf{E}_{p_0}[U] \ . \tag*{\qedhere}
    \end{align*}
\end{proof}}

Theorem \ref{thm:partitioning_bogoliubov} is a rigorous and powerful criterion to estimate the amount of statistical errors stemming from the microscopic nature of a system divided into non-interacting subsystems. It allows for a quantitative justification of the computation for a small system instead of a computationally unfeasible ideal system: simulating representative small subsystems in lieu of the fully  coupled system is justified if the interface energy $\Delta F$ is negligible compared to the energy scale of each subsystem. If the criterion holds, then  the complexity of the molecular simulation is reduced from, roughly, $\MO\bigl((3N)^{2}\bigr)$ to $\MO\bigl((3N_1)^{2} + \dotsb + (3N_d)^{2}\bigr)$ where $N_k$, $k = 1, \dotsc, d$, is the number of particles in the $k$-th subsystem and $N = N_1 + \dotsb + N_d$. If the criterion does not hold at a satisfactory level, then one has to revise the model of the system by modifying the interaction potential $U$ or by changing the size of the subsystems in order to incorporate effects resulting from the environment \cite{physrep}.

\begin{remark}
	The upper bound on $\Delta F$ is the well-known \textit{Bogoliubov inequality} or \textit{Peierls-Bogoliubov} or \textit{Gibbs-Bogoliubov inequality}\cite{peierls,liebleb,symanzik}.
	It is possible to improve the bounds using the Gibbs variational principle; specifically, for any integrable random variable $\phi$ and any positive pdf $f$, it holds that 
	\begin{equation}\label{eq:varbound}
		\mathbf{E}_{p}[U + \beta^{-1} \phi] - \beta^{-1} \log\bigl(\mathbf{E}_{p_0}[\ee^{\phi}]\bigr) \le \Delta F \le \mathbf{E}_{f}[U] + \beta^{-1} R(f, p_0) \ .
	\end{equation}  
\end{remark}

\section{Statistical operator and quantum relative entropy}\label{sec:relentropyQuant}
The bounds of Theorems \ref{thm:partitioning_bogoliubov} were proved in the framework of classical statistical mechanics, the natural question which arises at this point is whether an extension to quantum systems within the framework of quantum statistical mechanics is possible. This problem is addressed in this and in the following section. Specifically, in this section we will proceed with an elaboration of the framework of quantum statistics that allows the generalisation of the previous results. The key point in using the statistical or density operator (also known as density matrix) as the analogue of the phase-space pdf in the classical case are various trace inequalities that allow us to extend the concept of relative entropy to an equivalent quantum definition. The notion of relative entropy for quantum systems is not unique (e.g. Refs.~\citenum{donald1986,hiai1991}), and it turns out that the suitable concept for our purposes is the classical definition of Umegaki\cite{umegaki1962}, also termed the \textit{von Neumann relative entropy} in quantum information theory\cite{vedral}; cf.~also Ref.~\citenum{bookquantinf}. 

\subsection{Some density matrix theory}

We start by recapitulating the key concepts of statistical quantum mechanics, referring to the standard textbook of Zeidler\cite[Ch. 5.17]{Zeidler}. To begin with, we consider a quantum system on a complex separable Hilbert space $\MH$. Given an orthonormal basis $(\psi_n)_{n \in \N} \subset \MH$ and a sequence $(p_n)_{n \in \N} \subset \R$ of nonnegative real numbers with the properties
\begin{align*}
    \forall n \in \N \ : \ 0 \le p_n \le 1 \tand \sum_{n=1}^{\infty} p_n = 1 \ ,
\end{align*}
we refer to $\Psi := (\psi_n, p_n)_{n \in \N}$ as a statistical state of the system where $p_n$ is interpreted as the probability of finding the system in the state $\psi_n$. In the following, only \textit{mixed states}, characterised by $p_n < 1$ for all $n \in \N$, are of interest. If $T : \MH \supset \dom(T) \to \MH$ is a self-adjoint linear operator representing an observable, we define the expectation of $T$ in the statistical state $\Psi$ by 
\begin{align*}
    \mathbf{E}_{\Psi}[T] := \sum_{n=1}^{\infty} p_n \braket{\psi_n, T \psi_n} \ .
\end{align*}
Note that the expectation comprises \textit{statistical averaging} over the weights $p_n$ resulting from the statistical nature of the state $\Psi$ as well as \textit{quantum-mechanical averaging} $\braket{\psi_n, T \psi_n}$ resulting from the non-deterministic nature of quantum theory\cite[Ch. 2.1.1]{Nolting6}.

\begin{definition}[Statistical operator]
    A bounded self-adjoint linear operator $\rho : \MH \to \MH$ is called \textbf{statistical operator} if there are numbers $(p_n)_{n \in \N}  \subset [0, 1]$ with the property $\sum_{n=1}^{\infty} p_n = 1$ and an orthonormal basis $(\psi_n)_{n \in \N} \subset \MH$ such that the action of $\rho$ on $\psi \in \MH$ is given by
    \begin{align}\label{eq:partitioning_density}
        \rho \psi := \sum_{n=1}^{\infty} p_n \braket{\psi_n, \psi} \psi_n \ .
    \end{align}    
\end{definition}

Note that for all $m \in \N$, $\psi_m$ is an eigenfunction of $\rho$ with corresponding eigenvalue $p_m$, i.e. $\rho \psi_m = p_m \psi_m$. Furthermore, there is a one-to-one correspondence between statistical operators $\rho$ and statistical states $\Psi$ given by Eq. \eqref{eq:partitioning_density} (see \citet[Ch. 5.17, Prop. 2]{Zeidler}). Statistical operators are trace-class, hence compact. Therefore, they possess a discrete spectrum of eigenvalues with a corresponding orthonormal system of eigenvectors $(e_n)_{n \in \N} \subset \MH$ such that 
\begin{align*}
    \Tr(\rho) = \sum_{n=1}^{\infty} \braket{e_n, \rho e_n} = \sum_{n=1}^{\infty} \sum_{m=1}^{\infty} p_m \braket{\psi_m, e_n} \braket{e_n, \psi_m} = \sum_{m=1}^{\infty} p_m = 1 \ .
\end{align*}
Another relevant operator in this context, which is related to the concept of \textit{entropy}, is
\begin{align}\label{eq:partitioning_logdensity}
    \rho \log \rho : \MH \to \MH \quad , \quad (\rho \log \rho) \psi := \sum_{n=1}^{\infty} p_n \log(p_n) \braket{\psi_n, \psi} \psi_n \ .
\end{align}
(The mapping $\rho\mapsto \rho\log\rho$ is operator convex.) The expectation of an observable $T$ in a statistical state $\Psi$ can now be expressed in terms of the statistical operator $\rho$ as follows: 
\begin{align}\label{eq:partitioning_qmExp}
    \mathbf{E}_{\rho}[T] = \sum_{n=1}^{\infty} p_n \braket{\psi_n, T \psi_n} = \sum_{n=1}^{\infty} \braket{\rho \psi_n, T \psi_n} = \sum_{n=1}^{\infty} \braket{\psi_n, \rho T \psi_n} = \Tr(\rho T) \ .
\end{align}

Given a Hamiltonian $H : \MH \supset \dom(H) \to \MH$ with associated partition function $Z = \Tr(\ee^{-\beta H})$ which we assume to be finite, the canonical ensemble is described by the statistical operator
\begin{align*}
    \rho := \frac{ \ee^{-\beta H}}{\Tr(\ee^{-\beta H})} = \frac{\ee^{-\beta H}}{Z} \ .
\end{align*}
This is the quantum-mechanical generalisation of the pdf $p$ defined in Eq. \eqref{eq:partitioning_canonicalClassical}. 
A representation of $\rho$ of the form \eqref{eq:partitioning_density} is given in terms of the eigenfunctions $(\phi_n)_{n \in \N}$ of the Hamiltonian: letting $H \phi_n = E_n \phi_n$, we have $p_n = \ee^{- \beta E_n} / \sum_{n=1}^{\infty} \ee^{-\beta E_n} = \ee^{- \beta E_n} / Z$.

The two-sided Bogoliubov inequality is basically a consequence of the non-negativity of the quantum-mechanical relative entropy that we define next. {For a detailed overview of the history and the current mathematical status of the Bogoliubov inequality for quantum systems we refer the reader to the textbook by Zagrebnov on Gibbs semigroups \cite{zagrebnov}.}

\begin{definition}[Relative entropy\cite{umegaki1962}] 
    Let $\rho, \sigma : \MH \to \MH$ be two statistical operators. Define the quantum-mechanical \textbf{relative entropy} $\MR$ between $\rho$ and $\sigma$ to be
    \begin{align*}
        \MR(\rho, \sigma) := \Tr(\rho \log \rho) - \Tr(\rho \log \sigma) \ .
    \end{align*}
\end{definition}
The non-negativity of the relative entropy is a direct consequence of Klein's inequality that, for two positive trace-class operators $A,B: \MH\to\MH$ 
with $\Tr(A\log A)<\infty$, reads (see \citet[Lem.~14]{duan2017entropy})
\begin{equation}\label{eq:kleinsie}
\Tr(B\log B) \ge \Tr(B\log A + B - A)\ .
\end{equation}
(For an elementary proof in the finite-dimensional case, see \citet[App.~A]{carlen2019}.). {If one sets $B=\rho$ and $A=\sigma$, due to the fact that the statistical operators fulfill by definition the condition $\Tr \rho=1$ and $\Tr\sigma=1$, and under the assumption that $\Tr(\sigma\log\sigma) <\infty$ (trivially satisfied when $\MH$ is finite-dimensional, i.e. in applications of molecular simulation) one obtains that:}
\begin{equation}
  \MR(\rho, \sigma) = \Tr(\rho \log \rho) - \Tr(\rho \log \sigma) \ge 0 \ .
  \label{rineq}
\end{equation}
It also follows from the strict concavity of the logarithm that $\MR(\rho, \sigma)=0$ if and only if $\rho=\sigma$ in the sense that $p_n=q_n$ for all $n\in\N$, for which $p_n\neq 0$. 
{The crucial point of the current idea is that the non-negativity of $\MR$ can be used to derive the two-sided Bogoliubov inequality for statistical operators, as we will discuss next.}

\section{Two-sided quantum Bogoliubov inequality}\label{sec:bogo2quant}
Assume again that the Hamiltonian $H$ can be decomposed according to $H := H_0 + U$ and define 
\begin{equation}\label{eq:rho0rho}
\rho_0 := \frac{\ee^{-\beta H_0}}{Z_0}\,,\; Z_0 := \Tr(\ee^{-\beta H_0}) \quad \text{and}\quad  
\rho := \frac{\ee^{-\beta H}}{Z}\,,\; Z := \Tr(\ee^{-\beta H})\ .
\end{equation}
Using linearity of the trace and $\Tr(\rho) = 1$, we observe that 
\begin{align*}
    \MR(\rho_0,\rho) &= \Tr\left[\rho_0 \Bigl(\log(\ee^{-\beta H_0}) - \log(Z_0)\Bigr)\right] - \Tr\left[\rho_0 \Bigl(\log(\ee^{-\beta H}) - \log(Z)\Bigr)\right] \\
    &= - \beta \Tr(\rho_0 H_0) - \log(Z_0) + \beta \Tr(\rho_0 H) + \log(Z) \\
    &= \beta \Tr\Bigl(\rho_0 (H - H_0)\Bigr) + \log\left(\frac{Z}{Z_0}\right) \\
    &= \beta \Tr(\rho_0 U) + \log\left(\frac{Z}{Z_0}\right)
\end{align*}
which, together with inequality \eqref{rineq} implies the upper bound $\Delta F \le \mathbf{E}_{\rho_0}[U]$, with $\Delta F = -\beta^{-1}\log(Z/Z_0)$. This is the famous Peierls-Bogoliubov inequality\cite{peierls,liebleb,symanzik}; see also \citet[App.~A]{carlen2019}. 

The proof of the lower bound of the two-sided Bogoliubov inequality proceeds along the same line, using the reversed relative entropy $\MR(\rho,\rho_0)\ge 0$: 
\begin{align*}
    \MR(\rho,\rho_0) &= \Tr\left[\rho \Bigl(\log(\ee^{-\beta H}) - \log(Z)\Bigr)\right] - \Tr\left[\rho \Bigl(\log(\ee^{-\beta H_0}) - \log(Z_0)\Bigr)\right] \\
    &= \beta \Tr\Bigl(\rho (H_0 - H)\Bigr) + \log\left(\frac{Z_0}{Z}\right)\\
    &= - \beta \Tr(\rho U) - \log\left(\frac{Z}{Z_0}\right) \ .
\end{align*}
This entails the lower bound $\mathbf{E}_{\rho}[U] \le \Delta F$. {We summarize the calculation in the following theorem; an alternative proof, based on the differentiation of the exponential operator, can be found in \citet[Sec.~3.4; cf. Cor.~3.22 and Remark 3.23]{zagrebnov}.} 
\begin{theorem}\label{thm:partitioning_bogoliubov2}
	Let the partition functions $Z_0,Z>0$ in \eqref{eq:rho0rho} be finite and $U=H-H_0$, with $\mathbf{E}_{\rho_0}[U]<\infty$ and $\mathbf{E}_{\rho}[U]<\infty$. Then
	\begin{equation}\label{eq:partitioning_bogoliubov2}
        \mathbf{E}_{\rho}[U] \le \Delta F\le \mathbf{E}_{\rho_0}[U]\ .
	\end{equation}
\end{theorem}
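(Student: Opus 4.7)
The plan is to follow the classical strategy and derive both inequalities from the non-negativity of the von Neumann relative entropy established in \eqref{rineq}, applied once to the pair $(\rho_0,\rho)$ and once to the pair $(\rho,\rho_0)$. The crucial observation is that, by the Borel functional calculus applied to the self-adjoint Hamiltonians $H_0$ and $H$, one has $\log\rho_0 = -\beta H_0 - \log(Z_0)\,I$ and $\log\rho = -\beta H - \log(Z)\,I$, so both relative entropies collapse to a linear combination of thermal expectation values and $\log(Z/Z_0)$.

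Concretely, for the upper bound I would start from
\begin{equation*}
	\MR(\rho_0,\rho) = \Tr(\rho_0\log\rho_0) - \Tr(\rho_0\log\rho) = -\beta\Tr(\rho_0 H_0) - \log Z_0 + \beta\Tr(\rho_0 H) + \log Z,
\end{equation*}
using linearity of the trace together with $\Tr(\rho_0)=1$. Regrouping via $U = H-H_0$ and $\Delta F = -\beta^{-1}\log(Z/Z_0)$ gives $\MR(\rho_0,\rho) = \beta\bigl(\mathbf{E}_{\rho_0}[U] - \Delta F\bigr)$, and the non-negativity of $\MR$ immediately yields $\Delta F \le \mathbf{E}_{\rho_0}[U]$. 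The lower bound follows from the same manipulation with $\rho_0$ and $\rho$ exchanged: an identical computation produces $\MR(\rho,\rho_0) = \beta\bigl(\Delta F - \mathbf{E}_{\rho}[U]\bigr) \ge 0$, hence $\mathbf{E}_{\rho}[U] \le \Delta F$.

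The step I expect to require the most care is not the algebraic rearrangement but the verification that all four traces $\Tr(\rho_0 H_0)$, $\Tr(\rho_0 H)$, $\Tr(\rho H_0)$, $\Tr(\rho H)$ are well defined and finite, since otherwise the linearity manipulations are meaningless and the expressions $\mathbf{E}_{\rho_0}[U]$, $\mathbf{E}_{\rho}[U]$ cannot be separated from them. The hypotheses $\mathbf{E}_{\rho_0}[U]<\infty$ and $\mathbf{E}_{\rho}[U]<\infty$ together with the fact that Gibbs states $\rho_0,\rho$ automatically have finite mean energy under their own Hamiltonians (via the spectral decompositions of $H_0$ and $H$, which yields $\Tr(\rho_0 H_0) = Z_0^{-1}\sum_n E^{(0)}_n \ee^{-\beta E^{(0)}_n}<\infty$ and similarly for $\rho$) cover exactly what is needed. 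Once this is in place, the argument reduces to the two short identities above and a direct invocation of \eqref{rineq}.
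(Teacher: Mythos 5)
Your proposal is correct and follows essentially the same route as the paper: both bounds are obtained by evaluating $\MR(\rho_0,\rho)$ and $\MR(\rho,\rho_0)$ via $\log\rho=-\beta H-\log(Z)\,I$, $\log\rho_0=-\beta H_0-\log(Z_0)\,I$ and invoking the non-negativity \eqref{rineq}. One small caveat: your claim that Gibbs states \emph{automatically} have finite mean energy under their own Hamiltonians is not true in general (finiteness of $\sum_n \ee^{-\beta E_n}$ does not imply finiteness of $\sum_n E_n \ee^{-\beta E_n}$), but this is harmless because you never actually need the four traces separately --- working directly with $\log\rho_0-\log\rho=\beta U+\log(Z/Z_0)\,I$, as the paper effectively does, only requires the assumed finiteness of $\mathbf{E}_{\rho_0}[U]$ and $\mathbf{E}_{\rho}[U]$.
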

Independently of the proof one may choose, an innovative aspect that needs to be underlined is that Theorem \ref{thm:partitioning_bogoliubov2} can actually be applied to molecular simulations of quantum systems and to define the error due to finite size approximations which are unavoidable in simulations (above all in simulations of quantum systems). In the language of simulations, the theorem prescribes the calculation of the average energy at the interface of the subsystems in which a large system of reference is divided. The calculation must be carried for: (a) when the subsystems interact through the standard particle-particle interactions (which essentially corresponds to the calculation of an ideal surface energy in the large system of reference) and (b) in case the subsystems are non-interacting (which corresponds to effectively running separate simulations of smaller sizes). {In Section \ref{sec:computational_protocol}, we will discuss the main features of a computational protocol for typical situations occurring in molecular simulations and specify explicitly the quantities involved, highlighting the practical utility of the result above.}

\subsection{Obtaining sharper bounds}

The inequalities can be sharpened by using the upper bound of the Peierls-Bogoliubov inequality and the Golden-Thompson trace inequality. Specifically, we have:
{\begin{lemma}[Gibbs variational principle]\label{lem:gibbsvari}
	Let $M_1(\MH)$ be the set of Hermitian positive trace-class operators on $\MH$ with unit trace (i.e. statistical operators). Further let $\sigma\in M_1(\MH)$ and $W$ be any self-adjoint positive operator on a suitable (dense) subspace of $\MH$, with compact resolvent, such that $-\beta^{-1}\log\sigma+W$ is a self-adjoint positive operator on the domain of $-\beta^{-1}\log\sigma$. Then, 
	\begin{equation}\label{eq:gibbsvari}
        -\beta^{-1}\log \Tr\!\left(\ee^{\log\sigma -\beta W}\right) = \inf_{\gamma\in M_1(\MH)} \left\{\Tr\!\left(\gamma W\right) + \beta^{-1}\MR(\gamma,\sigma)\right\} \ .
	\end{equation}
	If $\Tr(\sigma\log\sigma)<\infty$ and $\Tr(\ee^{-\beta W}W)<\infty$, the infimum is attained at  
	\begin{equation}
		\gamma^* = \frac{\ee^{\log\sigma -\beta W}}{\Tr\!\left(\ee^{\log\sigma-\beta W}\right)}\,.
	\end{equation}
\end{lemma}}
\begin{proof}
We consider the upper bound $\Delta F\le \mathbf{E}_{\rho_0}[U]$ in \eqref{eq:partitioning_bogoliubov2} where, without loss of generality, we may assume that $\Tr(\ee^{-\beta H_0})=1$, such that $Z_0=1$. 

The upper bound can then be recast as
\begin{equation}
	-\beta^{-1}\log \Tr\!\left(\ee^{\log\rho_0 - \beta U}\right) \le \Tr\!\left(\rho_0 U\right)\ .
\end{equation}
Introducing the new potential $V = U - \beta^{-1}\log\rho_0$ turns the last inequality into
\begin{equation}\label{eq:gibbs1}
	-\beta^{-1}\log \Tr\!\left(\ee^{- \beta V}\right) \le \Tr\!\left(\rho_0 V\right) + \beta^{-1}\Tr\!\left(\rho_0\log\rho_0\right)\ .
\end{equation}
{Note that $\rho_0$ is arbitrary, in that the inequality holds for any combination of density operators $\rho_0\in M_1(\MH)$ and semibounded observable $V$ on $\MH$;} therefore we write \eqref{eq:gibbs1} in what follows as 
\begin{equation}\label{eq:gibbs2}
	-\beta^{-1}\log \Tr\!\left(\ee^{- \beta V}\right) \le \Tr\!\left(\gamma V\right) + \beta^{-1}\Tr\!\left(\gamma\log\gamma\right)\,
\end{equation}
for any density operator $\gamma$, where equality is obtained either by setting $\gamma=\ee^{-\beta V}/\Tr(\ee^{-\beta V})$ for a given $V$, or by setting $V=-\beta^{-1}\log\gamma$ when $\gamma$ is given. 
{Shifting the potential by $V\mapsto V + \beta^{-1}\log\sigma=:W$ for some density operator $\sigma\in M_1(\MH)$ and applying the Golden-Thompson rule
\begin{equation}\label{eq:goldenthompson}
\Tr\!\left(\ee^{-(A+B)}\right) \le \Tr\!\left(\ee^{-A}\ee^{-B}\right)
\end{equation}
that holds for every pair $A,B$ of self-adjoint, positive operators on a suitable (dense) subspace of $\MH$, such that $B$ is relatively bounded by $A$, with $A$-bound for $B$ being less than 1 and both $\ee^{-A}$ and $\ee^{-B}$ being trace-class (see \citet[Thm.~1]{breitenecker1972note}; cf.~\citet[Thm.~4]{ruskai1972} or \citet[Thm.~2]{araki1973}), we obtain} 
\begin{equation*}
		-\beta^{-1}\log \Tr\!\left(\sigma \ee^{- \beta W}\right) \le 	-\beta^{-1}\log \Tr\!\left(\ee^{\log\sigma - \beta W}\right)  \le  \Tr\!\left(\gamma W\right) + \beta^{-1}\MR(\gamma,\sigma)\ .
\end{equation*}
{where we used Eq. \eqref{eq:gibbs2} in the second step.} Hence,
\begin{equation}\label{eq:gibbs3}
	-\beta^{-1}\log \Tr\!\left(\sigma \ee^{-\beta W}\right) \le \Tr\!\left(\gamma W\right) + \beta^{-1}\MR(\gamma,\sigma)\,.
\end{equation}
{To show that equality can be attained, note that the right-hand side is operator convex in $\gamma$ and consider a non-decreasing sequence $(W_n)_{n\in \N}$ of bounded self-adjoint operators with $W_n\to W$ in the sense of strong resolvent convergence.\cite{weidmann1997strong} 
Further assume  $\Tr(\sigma\log\sigma)<\infty$ and $\Tr(\ee^{-\beta W}W)<\infty$, and define the sequence $(\gamma_n)_{n\in\N}$ of density operators 
\begin{equation}
	\gamma_n = \frac{\ee^{\log\sigma -\beta W_n}}{\Tr\!\left(\ee^{\log\sigma-\beta W_n}\right)}\quad,\quad n\ge 1\,.
\end{equation}
By \citet[Prop. 10.1.13]{oliviera2009} and the boundedness assumption for $W_n$, this implies strong convergence $\|W_n h - Wh\|\to 0$ for any $h\in\MH$ and with $\|\cdot\|$ denoting the norm on $\MH$. Then $\MR(\gamma_n,\sigma)<\infty$ for all $n\ge 1$, and, by Fatou's Lemma (that entails lower semi-continuity of the trace), we have 
\begin{align*}
	\inf_{n\ge 1}\left\{\Tr\!\left(\gamma_n W\right) + \beta^{-1}\MR(\gamma_n,\sigma)\right\} & = \inf_{n\ge 1} \left\{\frac{\Tr\!\left(\ee^{\log\sigma -\beta W_n} (W-W_n)\right)}{\Tr\!\left(\ee^{\log\sigma-\beta W_n}\right)} -\beta^{-1}\log \Tr\!\left(\ee^{\log\sigma -\beta W_n}\right) \right\}\\
	& \le  -\beta^{-1}\log \liminf_{n\to\infty}\Tr\!\left(\ee^{\log\sigma -\beta W_n}\right)\\
	& \le -\beta^{-1}\log \Tr\!\left(\ee^{\log\sigma-\beta W}\right)\,.
\end{align*}
This, together with (\ref{eq:gibbs3}), shows that the infimum in (\ref{eq:gibbsvari}) is attained at
\begin{equation}
	\gamma^* = \frac{\ee^{\log\sigma -\beta W}}{\Tr\!\left(\ee^{\log\sigma-\beta W}\right)}\,.
\end{equation}
}
\end{proof}

{
\begin{remark}
	The relative boundedness assumption underlying the Golden-Thompson inequality (\ref{eq:goldenthompson}) basically states that the extra potential $W$ in (\ref{eq:gibbs3}) is a small perturbation of the Hamiltonian $-\beta^{-1}\log\sigma$ (e.g. for $\sigma\propto \exp(-\beta H_0)$) that preserves self-adjointness.  
\end{remark}}

The Gibbs variational principle \eqref{eq:gibbsvari} has a dual form, known by the name of \textit{Donsker-Varadhan variational principle} that expresses the relative entropy by a maximisation over observables: 

\begin{lemma}\label{lem:donskervaradhan}
	
Under the assumptions of Lemma \ref{lem:gibbsvari}, it holds for all $\gamma,\sigma\in M_1(\MH)$ with finite relative entropy $\MR(\gamma,\sigma)<\infty$ that  
{	\begin{equation}\label{eq:dv}
	\MR(\gamma,\sigma) = \sup_{{\theta\le 0}}\left\{\Tr(\gamma\theta) - \log \Tr\left(\ee^{\log\sigma+\theta}\right)\right\}\, 
	\end{equation}
where the supremum is over all self-adjoint negative operators defined on a dense subspace of $\MH$.}
\end{lemma}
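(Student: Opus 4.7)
The plan is to deduce \eqref{eq:dv} from the Gibbs variational principle \eqref{eq:gibbsvari} by Legendre--Fenchel biduality. First I would specialise \eqref{eq:gibbsvari} to $\beta=1$ and $W=-\psi$, and rearrange it into the dual form
\[
\log\Tr\!\bigl(\sigma\,\ee^{\psi}\bigr) \;=\; \sup_{\gamma'\in M_1(\MH)} \bigl\{\Tr(\gamma'\psi) - \MR(\gamma',\sigma)\bigr\}\,,
\]
which identifies the log-partition functional $\Lambda(\psi):=\log\Tr(\sigma \ee^{\psi})$ as the Fenchel conjugate of $\MR(\cdot,\sigma)$, extended by $+\infty$ outside $M_1(\MH)$. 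Specialising this to a fixed $\gamma\in M_1(\MH)$ yields the Young-type bound $\Tr(\gamma\psi) - \Lambda(\psi)\le \MR(\gamma,\sigma)$ for every $\psi\in B(\MH)$, and taking the supremum over $\psi$ gives the easy ``$\ge$'' direction of \eqref{eq:dv}.

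For the reverse inequality I would apply the Fenchel--Moreau biconjugation theorem to $f(\gamma):=\MR(\gamma,\sigma)$. The function $f$ is convex, since the term $-\Tr(\gamma\log\sigma)$ is linear and $\gamma\mapsto \Tr(\gamma\log\gamma)$ is convex because $x\mapsto x\log x$ is operator convex, so its trace inherits convexity. It is also lower-semicontinuous in the trace-norm topology; this can be extracted from Klein's inequality \eqref{eq:kleinsie} via a finite-rank spectral-truncation argument together with monotone convergence. Biconjugation then gives $f = f^{**} = \Lambda^{*}$, which is exactly \eqref{eq:dv}.

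The main obstacle is of a functional-analytic nature: to invoke Fenchel--Moreau one has to set up the correct dual pairing between the self-adjoint trace-class operators (where $\MR(\cdot,\sigma)$ lives) and the bounded Hermitian operators in $B(\MH)$ (where $\psi$ lives), and check lower-semicontinuity in the corresponding weak topology. In finite dimensions every step is automatic and the argument reduces to elementary matrix calculus. An alternative route that avoids biduality is to build an explicit maximising sequence; the naive candidate $\psi=\log\gamma-\log\sigma$ (suitably truncated on the spectra) does satisfy $\Tr(\gamma\psi)=\MR(\gamma,\sigma)$ by direct calculation, but the Golden--Thompson residue $\log\Tr(\sigma\,\ee^{\log\gamma-\log\sigma})\ge 0$ means the candidate is not saturated in the non-commuting case, so a finer refinement along joint spectral projections of $\gamma$ and $\sigma$ would have to be worked out---this I expect to be the most delicate ingredient in a fully self-contained proof.
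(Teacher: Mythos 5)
Your first direction is exactly the paper's argument: the paper proves the lemma by setting $\psi=-\beta W$ in \eqref{eq:gibbs3}, which after rearranging is precisely your Young-type bound $\Tr(\gamma\psi)-\log\Tr\left(\sigma\ee^{\psi}\right)\le\MR(\gamma,\sigma)$, and then simply asserts that operator concavity in $\psi$ ``yields the desired statement''. So for the inequality $\sup\{\cdots\}\le\MR(\gamma,\sigma)$ you and the paper coincide.

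The genuine gap is in the reverse inequality, and you have in fact put your finger on why it cannot be closed as stated. Your biduality argument needs the exact conjugacy $\Lambda=f^{*}$ with $\Lambda(\psi)=\log\Tr\left(\sigma\ee^{\psi}\right)$ and $f=\MR(\cdot,\sigma)$; but the Gibbs variational principle in its exact form gives $f^{*}(\psi)=\log\Tr\left(\ee^{\log\sigma+\psi}\right)$, and Golden--Thompson yields $\log\Tr\left(\sigma\ee^{\psi}\right)\ge\log\Tr\left(\ee^{\log\sigma+\psi}\right)$, i.e.\ $\Lambda\ge f^{*}$, hence only $\Lambda^{*}\le f^{**}=f$ --- which is the easy direction again. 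The ``Golden--Thompson residue'' you observe for the candidate $\psi=\log\gamma-\log\sigma$ is not an artefact of a bad candidate that a refinement along joint spectral projections could remove: it is known (Hiai--Petz; Berta--Fawzi--Tomamichel) that $\sup_{\psi}\left\{\Tr(\gamma\psi)-\log\Tr\left(\sigma\ee^{\psi}\right)\right\}$ equals the \emph{measured} relative entropy, which is strictly smaller than the Umegaki quantity $\MR(\gamma,\sigma)$ whenever $\gamma$ and $\sigma$ do not commute. The identity \eqref{eq:dv} therefore holds as an equality only in the commuting case; the Donsker--Varadhan/Petz formula for $\MR$ replaces $\log\Tr\left(\sigma\ee^{\psi}\right)$ by $\log\Tr\left(\ee^{\log\sigma+\psi}\right)$, and with that replacement your Fenchel--Moreau route (convexity plus lower semicontinuity of $\MR(\cdot,\sigma)$ in a suitable pairing of trace-class with bounded Hermitian operators) does go through. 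Note that the paper's one-line appeal to concavity does not address this point either --- concavity of $\psi\mapsto\Tr(\gamma\psi)-\log\Tr\left(\sigma\ee^{\psi}\right)$ does not imply that its supremum attains the value $\MR(\gamma,\sigma)$ --- so your analysis is, if anything, more candid than the published proof about where the real difficulty lies.
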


\begin{proof}
	Setting $\theta=-\beta W$ in \eqref{eq:gibbs3} and noting that the resulting lower bound for $\MR(\gamma,\sigma)$ is operator concave in $\theta$ yields the desired statement. 
\end{proof}

We can now combine Lemmas \ref{lem:gibbsvari} and \ref{lem:donskervaradhan} with the inequality
\[
\Delta F\le\mathbf{E}_\gamma[U]+\beta^{-1}\MR(\gamma,\rho_0)\ ,
\]
that, by Lemma \ref{lem:gibbsvari}, holds for an arbitrary density matrix $\gamma$. Then by Theorem \ref{thm:partitioning_bogoliubov2}, we obtain after setting $\sigma=\rho_0$ and $W=U$ in \eqref{eq:gibbs3} and \eqref{eq:dv}:
\begin{cor}
Under the assumptions of Theorem \ref{thm:partitioning_bogoliubov2}, it holds 
\begin{equation}\label{eq:partitioning_bogoliubov3}
    \sup_{{V\ge 0}}\left\{\mathbf{E}_\rho[U -  V] - \beta^{-1}\log{\Tr\!\left(\ee^{\log\rho_0-\beta V}\right)} \right\} =  \Delta F =  \inf_{\gamma\in M_1(\MH)}\left\{\mathbf{E}_{\gamma}[U] + \beta^{-1}\MR(\gamma,\rho_0)\right\}\,.
\end{equation}
In particular, we have the family of two-sided bounds that is valid for any {positive} observable $V$ on $\MH$ and any density matrix $\gamma\in M_1(\MH)$: 
\begin{equation}\label{eq:partitioning_bogoliubov4}
	\mathbf{E}_{\rho}[U-V] - \beta^{-1}\log{\Tr\left(\ee^{\log\rho_0-\beta V}\right)} \le  \Delta F\le \mathbf{E}_{\gamma}[U] + \beta^{-1}\MR(\gamma,\rho_0)\ .
\end{equation}
{By the Golden-Thompson inequality, Eq. (\ref{eq:partitioning_bogoliubov4}) implies 
\begin{equation}\label{eq:partitioning_bogoliubov5}
	\mathbf{E}_{\rho}[U-V] - \beta^{-1}\log\mathbf{E}_{\rho_0}\!\left[\ee^{-\beta V}\right] \le  \Delta F\le \mathbf{E}_{\gamma}[U] + \beta^{-1}\MR(\gamma,\rho_0)
\end{equation}
where the lower bound can in general not be attained, unless $H_0$ and $V$ commute since in this case it holds that $\Tr(\ee^{\log\rho_0-\beta W}) = \mathbf{E}_{\rho_0}[\ee^{-\beta W}]$. If the operators do not commute, the left-hand side may be smaller than the right-hand side, so Eq. (\ref{eq:partitioning_bogoliubov5}) yields a slightly weaker lower bound. Finally, note that the bounds of Theorem \ref{thm:partitioning_bogoliubov2} are a special case obtained by setting $V=0$ and $\gamma=\rho_0$.}
\end{cor}

\section{Sketch of the computational protocol for molecular simulations}\label{sec:computational_protocol}
{A typical situation where an optimal criterion for the separation of a large system into smaller independent subsystems is of particular importance occurs in the determination of the optimal size of the simulation box for a molecular liquid at a given molecular density. In principle, one needs a large number of molecules so that at the electronic level, microscopic properties such as spectroscopic responses linked to, e.g., molecular bonding are well-described. However, the cost of a large simulation often goes beyond the available computational resources, and thus one needs to choose a system as small as possible while still being able to reasonably reproduce the properties of interest.

The criterion given in Theorem \ref{thm:partitioning_bogoliubov2} can be used to define the optimal size of the simulation box for electronic properties of a molecular liquid. Fig. \ref{cartoon} illustrates a typical setup of this kind for a static situation, and the example below treats the simple case of partitioning a large system into two smaller subsystems; the extension to several subsystems is straightforward.
\begin{figure}
  \centering\includegraphics[width=\figwidth]{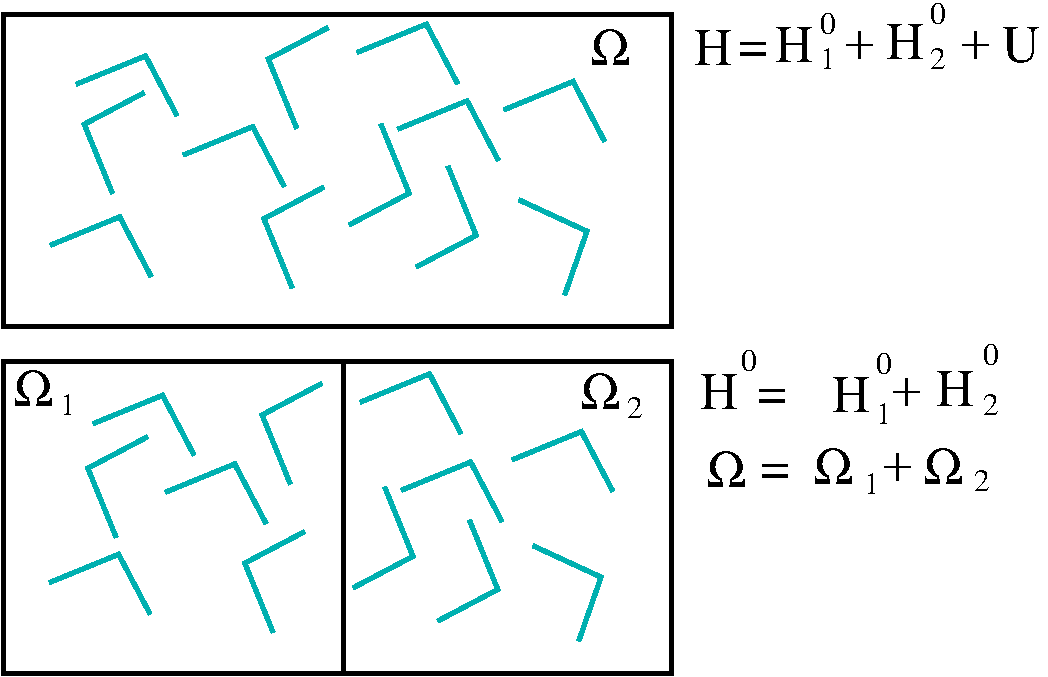}
  \caption{{The simulation box of a molecular liquid (e.g. water). The upper part schematically illustrates the large system with domain $\Omega$ and full Hamiltonian $H$ while the lower part illustrates the partitioning of the large system into two independent subsystems with domains $\Omega_{1}$ and $\Omega_{2}$ and corresponding Hamiltonians $H^{0}_{1}$ and $H^{0}_{2}$. The two subsystems can be treated in separate simulations.}}
  \label{cartoon}
\end{figure}
In the current example, the electronic Hamiltonian (in atomic units) of the whole system in the domain $\Omega$ takes the form
\begin{equation}\label{fullham}
    H=-\frac{1}{2}\sum_{i=1}^{N}\nabla_{i}^{2}+\sum_{1 \le i < j \le N}\frac{1}{|{\bf r}_{i}-{\bf r}_{j}|}-\sum_{I=1}^{M}\sum_{i=1}^{N}\frac{Z_{I}}{|{\bf R}_{I}-{\bf r}_{i}|}
\end{equation}
where $N$ is the total number of electrons, $M$ the total number of nuclei and  $Z_{I}$ the charge of the $I$-th nucleus; $\mathbf{r}_i$ and $\mathbf{R}_I$ denote the positions of the electrons and nuclei, respectively. The question is whether the approximation of considering only one of the subsystems, e.g. the one defined in the domain $\Omega_{1}$, would be sufficient to properly address the local electronic properties, and thus to avoid to include the rest of the box which occupies the domain $\Omega_{2}$.

The above question is equivalent to the problem of determining the degree of  independence of the two subsystems with respect to the larger system of reference as expressed by Theorem \ref{thm:partitioning_bogoliubov2}. In this context, the Hamiltonian for the system in the domain $\Omega_{1}$ reads
\begin{equation}
H^{0}_{1}=-\frac{1}{2}\sum_{i=1}^{n}\nabla_{i}^{2}+\sum_{1 \le i < j \le n}\frac{1}{|{\bf r}_{i}-{\bf r}_{j}|}-\sum_{I=1}^{W}\sum_{i=1}^{n}\frac{Z_{I}}{|{\bf R}_{I}-{\bf r}_{i}|}
\label{ham1}
\end{equation}
where $n$ is the number of electrons and $W$ is the number of nuclei in $\Omega_{1}$. Similarly, the Hamiltonian for the system in the domain $\Omega_{2}$ is given by
\begin{equation}
H^{0}_{2}=-\frac{1}{2}\sum_{i=1}^{m}\nabla_{i}^{2}+\sum_{1 \le i < j \le m}\frac{1}{|{\bf r}_{i}-{\bf r}_{j}|}-\sum_{I=1}^{Y}\sum_{i=1}^{m}\frac{Z_{I}}{|{\bf R}_{I}-{\bf r}_{i}|}
\label{ham2}
\end{equation}
where $m$ is the number of electrons and $Y$ is the number of nuclei in $\Omega_{2}$. Then the operator $U$ appearing in Theorem \ref{thm:partitioning_bogoliubov2} for the given instantaneous partitioning of the system takes the form
\begin{equation}
U=\sum_{i=1}^{n}\sum_{k=1}^{m}\frac{1}{|{\bf r}_{i}-{\bf r}_{k}|}-\sum_{K=1}^{Y}\sum_{i=1}^{n}\frac{Z_K}{|{\bf R}_{K}-{\bf r}_{i}|}-\sum_{I=1}^{W}\sum_{k=1}^{m}\frac{Z_I}{|{\bf R}_{I}-{\bf r}_{k}|}
\label{uform}
\end{equation}
with ${\bf r}_{i},{\bf R}_{I} \in \Omega_{1}$ for all $i=1, \dotsc, n$ and $I=1, \dotsc, W$, and ${\bf r}_{k},{\bf R}_{K} \in \Omega_{2}$ for all $k=1, \dotsc, m$ and $K=1, \dotsc, Y$.

With the partitioning of $\Omega$ defined above, the calculation of the quantities in Theorem \ref{thm:partitioning_bogoliubov2} can be done through the density matrix $\rho_{\Omega}$ of the full reference system corresponding to the Hamiltonian $H$ in the whole domain $\Omega$ while the density matrix corresponding to the two non-interacting subsystems $\rho_{1,2}=\rho_{\Omega_{1}}\otimes\rho_{\Omega_{2}}$ is determined by  $\rho_{\Omega_{1}}$ and $\rho_{\Omega_{2}}$ calculated in two separate simulations on $\Omega_{1}$ with $H^{0}_{1}$ and on $\Omega_{2}$ with $H^{0}_{2}$ respectively. Eq. \eqref{eq:partitioning_bogoliubov} then implies
\begin{equation}\label{eqex}
    \mathbf{E}_{\rho_{\Omega}}[U] \le \Delta F\le \mathbf{E}_{\rho_{1,2}}[U]\ .
\end{equation}
Furthermore, the definition of the operator $U$ shows that the interactions with respect to the electronic degrees of freedom are only of one-body and two-body form; thus, the density matrix representations needed for the calculations are one-body and two-body reduced density matrix terms, that is, three dimensional electron densities and the two-body electron-electron correlation (for example given by the electron radial distribution function $g(|{\bf r}-{\bf r'}|)$). Such quantities are routinely computed in electronic structure calculations and numerical schemes for quantum chemistry , e.g., Kohn-Sham Density Functional Theory \cite{tru}, Quantum Monte Carlo \cite{quantummonte} and high level quantum-chemical techniques \cite{quantchem}. In a dynamic simulation, the statistics is enlarged by repeating the procedure and considering several instantaneous, uncorrelated configurations along the molecular trajectory of the system.

If the mean coupling energies $\mathbf{E}_{\rho_{\Omega}}[U]$ and $\mathbf{E}_{\rho_{1,2}}[U]$ per molecule (i.e. divided by the number of atoms) have values of the order of the characteristic energy scale of the quantity of interest, such as the molecule-molecule energy bond per molecule, then one can conclude that the model error due to the chosen size of the (isolated) simulation box is too large. Conversely, when $\mathbf{E}_{\rho_{\Omega}}[U]$ and $\mathbf{E}_{\rho_{1,2}}[U]$ have values much smaller than the physical quantity of reference, one can reasonably trust in the simulation setup chosen. Once an optimal box size is determined, according to the protocol suggested here, this information can be used in all the successive simulations for the same quantities of interest. 
In perspective, one may be able to extend this idea to the definition of the optimal size of the quantum region in quantum mechanical/molecular mechanical simulation where a quantum region is embedded into a larger classical molecular systems \cite{qmmm}, or to the determination of the corrections required in the computational technique of molecular fragments where large polyatomic molecules such as polymers are divided into independent fragments and treated independently via quantum-chemical calculations \cite{fragm}, a technique which seems to be very promising for calculations on (futuristic) quantum computers \cite{qcomp}.}

\section{Discussion and Conclusions}\label{sec:conclusions}
Large systems of particles in fully atomistic resolution are a computational challenge for numerical simulations. The routinely used approximation is to treat small systems as representatives of large systems under the assumption that the influence of finite size effects is negligible in the computation of physical and chemical quantities of interest. The latter assessment requires precise and rigorous criteria of controlling these effects, otherwise modelling artefacts may easily deteriorate the predictions that can be obtained from finite systems. This work continues the efforts that were undertaken in a previous paper, in which a general criterion to precisely estimate the effect of finiteness of the system was developed for classical systems. Here, the extension to quantum systems is made by introducing the operator formalism for the equivalent classical quantities, namely, the statistical operator which is formally equivalent to the classical phase-space probability distribution and the von Neumann relative entropy that is commonly used in quantum information theory. 

In doing so, we have proved a two-sided Hilbert space version of the well-known Bogoliubov inequality that is applicable to simulation of infinite-dimensional quantum systems, regardless of whether these systems are fermionic or bosonic. The bounds can be useful in connection with electronic structure calculations for open systems where finite size effects are often a major burden in the development of efficient computational models \cite{trans,ceperl,advthsimquant}, or they can be used for bosonic and semiclassical systems in path integral molecular dynamics simulations, in which the control of finite size effects is the current bottleneck for applications in many fields of interest  \cite{parrbos,manolop,animesh,revtwo}. 

\acknowledgments{This work was partly supported by the DFG Collaborative Research Center 1114 ``Scaling Cascades in Complex Systems'', project No.235221301, Projects A05 (C.H.) ``Probing scales in equilibrated systems by optimal nonequilibrium forcing'' and C01 (L.D.S.) ``Adaptive coupling of scales in molecular dynamics and beyond to fluid dynamics''. {The authors thank Armen Allahverdyan for pointing out a flaw in the statement of the Gibbs variational principle in an earlier version of the manuscript.}\\ On behalf of all authors, the corresponding author states that there is no conflict of interest.\\ All data generated or analysed during this study are included in this published article.
}

\bibliography{literature}
\end{document}